\newtheorem{theorem}{Theorem}[section]
\newtheorem{lemma}[theorem]{Lemma}
\newtheorem{corollary}[theorem]{Corollary}
\newtheorem{definition}[theorem]{Definition}
\newcommand{\zo}{\{0,1\}}
\newcommand{\eps}{\varepsilon}
\newcommand{\E}[2]{{\mathbb{E}_{#1}\left[#2\right]}}
\newcommand{\Zbb}{\mathbb{Z}}
\newcommand{\aanote}[1]{}
\newcommand{\irnote}[1]{}
\title{Tight Lower Bounds for Data-Dependent Locality-Sensitive Hashing}
\author{
Alexandr Andoni\footnote{Work done in part while the author was at
  the Simons Institute for the Theory of Computing, Berkeley University.}\\Columbia University
\and
Ilya Razenshteyn\\CSAIL MIT}
\begin{document}
\maketitle

\begin{abstract}
We prove a tight lower bound for the exponent $\rho$ for
data-dependent Locality-Sensitive Hashing schemes, recently used to
design efficient solutions for the $c$-approximate nearest neighbor
search. In particular, our lower bound matches the bound of $\rho\le
\frac{1}{2c-1}+o(1)$ for the $\ell_1$ space, obtained via the recent
algorithm from [Andoni-Razenshteyn, STOC'15].

In recent years it emerged that data-dependent hashing is strictly
superior to the classical Locality-Sensitive Hashing, when
the hash function is data-{\em independent}. In the latter setting, the best
exponent has been already known: for the $\ell_1$ space, the tight
bound is $\rho=1/c$, with the upper bound from [Indyk-Motwani,
  STOC'98] and the matching lower bound from [O'Donnell-Wu-Zhou,
  ITCS'11].

We prove that, even if the hashing is data-dependent, it must hold
that $\rho\ge \frac{1}{2c-1}-o(1)$. To prove the result, we need to
formalize the exact notion of data-dependent hashing that also
captures the complexity of the hash functions (in addition to their
collision properties). Without restricting such complexity, we would
allow for obviously infeasible solutions such as the Voronoi diagram
of a dataset.  To preclude such solutions, we require our hash functions
to be succinct.
This condition is
satisfied by all the known algorithmic results.

\end{abstract}
\onehalfspacing
\thispagestyle{empty}
\newpage
\setcounter{page}{1}

\section{Introduction}

We study lower bounds for the high-dimensional nearest neighbor search
problem, which is a problem of major importance in several areas, such
as databases, data mining, information retrieval, computer vision,
computational geometry, signal processing, etc. This problem suffers
from the ``curse of dimensionality'' phenomenon: either space or
query time are exponential in the dimension $d$. To escape this curse,
researchers proposed {\em approximation} algorithms for the problem.
In the $(c, r)$-approximate near neighbor problem, the data structure
may return any data point whose distance from the query is at most $c
r$, for an approximation factor $c > 1$ (provided that there exists a
data point within distance $r$ from the query).  Many approximation
algorithms are known for this problem: e.g., see
surveys~\cite{samet-new, AI-CACM, a-nnson-09, wssj-hsss-14}.

An influential algorithmic technique for the approximate near
neighbor search (ANN) is the {\em Locality Sensitive Hashing} (LSH)
\cite{im-anntr-98,him-anntr-12}. The main idea is to hash the points
so that the probability of collision is much higher for points that
are close to each other (at distance $\le r$) than for those which are
far apart (at distance $> cr$). Given such hash functions, one can
retrieve near neighbors by hashing the query point and retrieving
elements stored in buckets containing that point.  If the probability
of collision is at least $p_1$ for the close points and at most $p_2$
for the far points, the algorithm solves the $(c, r)$-ANN using
essentially $O(n^{1+\rho}/p_1)$ extra space and $O(dn^{\rho}/p_1)$
query time, where
$\rho=\tfrac{\log(1/p_1)}{\log(1/p_2)}$~\cite{him-anntr-12}. The value of the
exponent $\rho$ thus determines the ``quality'' of the LSH families
used.

Consequently, a lot of work focused on understanding the best possible
value $\rho$ for LSH, including the sequence of upper bounds
\cite{im-anntr-98, diim-lshsp-04, ai-nohaa-06} and lower bounds
\cite{mnp-lblsh-07, owz-olbls-11}. Overall, they established the
precise bounds for the best value of $\rho$: for~$\ell_1$ the tight
bound is $\rho=\tfrac{1}{c}\pm o(1)$. In general, for $\ell_p$, where $1\le
p\le2$, the tight bound is $\rho=\tfrac{1}{c^p}\pm o(1)$.

Surprisingly, it turns out there exist {\em more efficient} ANN data
structures, which step outside the LSH framework. Specifically,
\cite{ainr-blsh-14, ar-optimal-15} design algorithms using the concept
of {\em data-dependent hashing}, which is a randomized hash family
that itself adapts to the actual given dataset. In particular, the
result of \cite{ar-optimal-15} obtains an exponent
$\rho=\tfrac{1}{2c^p-1}+o(1)$ for the $\ell_p$ space, thus improving
upon the best possible LSH exponent essentially by a factor of 2 for
both $\ell_1$ and $\ell_2$ spaces.

\paragraph{Our result.}
Here we prove that the exponent $\rho=\tfrac{1}{2c^p-1}$ from
\cite{ar-optimal-15} is essentially optimal even for data-dependent
hashing, and cannot be improved upon. Stating the precise theorem
requires introducing the precise model for the lower bound, which we
accomplish below. For now, we state our main theorem informally:

\begin{theorem}[Main, informal]
\label{thm:mainInformal}
Any data-dependent hashing scheme for $\ell_1$ that achieves
probabilities $p_1$ and $p_2$ must satisfy
$$
\rho = \frac{\log 1 / p_1}{\log 1 / p_2} \ge \frac{1}{2c-1}-o(1),
$$
as long as the description
complexity of the hash functions is sufficiently small.

An immediate consequence is that $\rho\ge \frac{1}{2c^p-1}-o(1)$ for
all $\ell_p$ with $1\le p\le 2$, using the embedding
from \cite{llr-ggsaa-95}.
\end{theorem}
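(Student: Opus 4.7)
The plan is to build a random pseudo-random instance of the $(cr,r)$-ANN problem on the Hamming cube and show that every data-dependent hashing scheme with sufficiently small description complexity must incur $\rho \ge 1/(2c-1)-o(1)$ on it. The conceptual point is that the AR'15 improvement from $1/c$ to $1/(2c-1)$ is obtained by reducing an adversarial instance to a pseudo-random one via a data-dependent decomposition, so if the instance is already pseudo-random, the algorithm has nothing further to gain and the bound should be tight. Concretely, I would take $d$ large and let $n-1$ points of the dataset $P$ be i.i.d.\ uniform in an appropriately chosen Hamming ball (or subcube) so that pairwise Hamming distances concentrate around $R=(2c-1)r$, then plant a uniformly random query $q$ together with a uniformly random near neighbor $p^* \in P$ at Hamming distance exactly $r$. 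Standard concentration ensures the minimum non-planted distance is at least $cr$ with high probability, so this is a valid $(cr,r)$-ANN instance.

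Given this setup, the core argument has two parts. First, for any \emph{fixed} partition $h$ of $\{0,1\}^d$, an OWZ-type noise-stability/hypercontractivity inequality on the Hamming cube yields
$$
\mathbb{E}[\Pr[h(u)=h(v)]] \le \bigl(\mathbb{E}[\Pr[h(u)=h(w)]]\bigr)^{1/(2c-1) - o(1)},
$$
where $(u,v)$ ranges over random Hamming-distance-$r$ pairs and $(u,w)$ over random distance-$R$ pairs; this is exactly the target $\rho \ge 1/(2c-1)-o(1)$, with the effective approximation ratio $R/r=2c-1$ replacing the usual $c$. Second, the data-dependent hash $h_P$ has description length $s$ by the succinctness hypothesis, so there are at most $2^s$ candidate partitions $\mathcal{F}$; a union bound combined with concentration of the collision-probability profiles over the random instance lets us conclude that $h_P$ behaves like some fixed member of $\mathcal{F}$, up to an additive loss that is absorbed in the $o(1)$ provided $s$ is small enough (e.g.\ $s=o(d)$). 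By exchangeability of the dataset construction, the collision probability for the \emph{planted} close pair equals the averaged collision probability for a random distance-$r$ pair, and likewise for far pairs, so the fixed-$h$ inequality transfers cleanly to the definition of $\rho$ for the problem.

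The main obstacle is this second step: OWZ gives a constraint on fixed partitions, and to lift it to a family of $2^s$ adaptive partitions we must ensure $h_P$ cannot ``cheat'' on the planted pair even while averaging over the rest of the data. This is exactly where succinctness is essential --- without it $h_P$ could encode the Voronoi diagram of $P$ and trivially identify $p^*$. The delicate bookkeeping of bounding $s$, controlling the tails of the pairwise-distance concentration of $P$, and keeping the $o(1)$ error terms compatible between the hypercontractive inequality and the union bound is expected to dominate the technical write-up; in particular, the OWZ noise-stability bound must be robust to the small exceptional set of non-typical pairs in $P$. Once the $\ell_1$ bound is established, the Linial--London--Rabinovich embedding cited in the theorem transfers it to $\ell_p$ for $1\le p \le 2$ in a routine manner, replacing $c$ by $c^p$ throughout.
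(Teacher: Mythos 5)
There is a genuine gap in your first step, and it is not a repairable detail: it is the choice of hard instance itself. You place the dataset so that its pairwise distances concentrate at $R=(2c-1)r$ and then invoke an OWZ-type inequality with exponent $r/R=\frac{1}{2c-1}$. But the OWZ bound with exponent $1/\tilde c$ (for $\tilde c = R/r$) holds for \emph{correlated} far pairs, i.e.\ pairs generated by noise at rate $R/d$ from a uniform marginal; in your instance the far pairs are \emph{uncorrelated} random points within the ball/subcube that supports $P$ (inside a subcube of dimension $d'\approx 2(2c-1)r$, two uniform points sit at the typical distance $d'/2=R$). For uncorrelated pairs at the typical distance, the best true inequality is the MNP-type one, whose exponent is $\frac{1}{2\tilde c-1}=\frac{1}{4c-3}$, not $\frac{1}{2c-1}$ --- and this is not a slack in the analysis but a fact about the instance: the Andoni--Razenshteyn scheme itself (which is succinct and fits the lower-bound model, as the paper's appendix argues) achieves $\rho\le\frac{1}{2\tilde c-1}+o(1)=\frac{1}{4c-3}+o(1)$ on exactly such ``random inside a small ball'' datasets, since after recentering this is the easy, pseudo-random case for it. So on your instance the claimed conclusion $\rho\ge\frac{1}{2c-1}-o(1)$ is simply false, and no amount of care with the union bound, concentration, or the $o(1)$ bookkeeping can recover it. Conversely, if you insist on genuinely correlated far pairs so that OWZ applies, you cannot realize that pair distribution as ``two random points of a random dataset'' without clustering the dataset, which again hands the advantage to data-dependent hashing.

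The paper's actual route avoids this trap: the dataset is uniform over the \emph{full} cube, so far pairs sit at distance $\approx d/2$ and the near threshold is $r=d/(2c)$, giving true approximation $c$; the gain over the naive MNP exponent comes not from OWZ but from strengthening the random-walk argument --- a walk of length $\alpha d$ with $\alpha=\frac12\ln\frac{c}{c-1}$ backtracks enough that its endpoint is still within $d/(2c)$ w.h.p., and plugging this longer walk into the MNP hypercontractive inequality yields exponent $\frac{e^{2\alpha}-1}{e^{2\alpha}+1}=\frac{1}{2c-1}-o(1)$ (Lemma~\ref{dubiner}). Your second component --- describing the data-dependent hash by $s$ bits, union-bounding over the $\le 2^s$ candidate partitions, and showing empirical collision probabilities concentrate around their true averages so the data-dependent family behaves like a fixed one --- is essentially the paper's second component (with the truncation at a level $\theta$ and the condition $\frac{\log T}{p_1}\le n^{1-\Omega(1)}$), so that part of your plan is sound in spirit; it is the fixed-partition inequality and the geometry of the instance that must be replaced.
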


\subsection{Lower Bound Model}
\label{sec:model}

To state the precise theorem, we need to formally describe what is
data-dependent hashing. First, we state the definition of
(data-independent) LSH, as well as define LSH for a fixed
dataset~$P$.

\begin{definition}[Locality-Sensitive Hashing]
  We say that a hash family $\mathcal{H}$ over $\{0, 1\}^d$ is {\em $(r_1, r_2, p_1, p_2)$-sensitive}, if for every
  $u, v \in \{0, 1\}^d$ one has:
  \begin{itemize}
  \item if $\|u - v\|_1 \leq r_1$, then $\underset{h \sim \mathcal{H}}{\mathrm{Pr}}[h(u) = h(v)] \geq p_1$;
  \item if $\|u - v\|_1 > r_2$, then $\underset{h \sim \mathcal{H}}{\mathrm{Pr}}[h(u) = h(v)] \leq p_2$.
  \end{itemize}
\end{definition}

We now refine the notion of data-independent LSH, where we require the
distribution to work only for a particular dataset $P$.

\begin{definition}[LSH for a dataset $P$]
  \label{data_dependent_def}
  A hash family $\mathcal{H}$ over $\{0, 1\}^d$ is said to be {\em $(r_1, r_2, p_1, p_2)$-sensitive for a dataset} $P \subseteq \{0, 1\}^d$, if:
  \begin{itemize}
  \item for every $v \in \{0, 1\}^d$ and every $u \in P$ with $\|u - v\|_1 \leq r_1$ one has
    $$
    \underset{h \sim \mathcal{H}}{\mathrm{Pr}}[h(u) = h(v)] \geq p_1;
    $$
  \item
    $
    \underset{\substack{h \sim \mathcal{H}\\u, v \sim P}}{\mathrm{Pr}}[h(u) = h(v) \mbox{ and }\|u - v\|_1 > r_2] \leq p_2.
    $
  \end{itemize}
\end{definition}

Note that the second definition is less stringent than the first one:
in fact, if all the points in a dataset are at distance more than
$r_2$ from each other, then an LSH family $\cal H$ is also LSH for
$P$, but not necessarily vice versa! Furthermore, in the second
definition, we require the second property to hold only
{\em on average} (in contrast to {\em every} point as in the first
definition). This aspect means that,
while Definition~\ref{data_dependent_def} is certainly necessary for
an ANN data structure, it is not obviously \emph{sufficient}. Indeed,
the algorithm from \cite{ar-optimal-15} requires proving additional
properties of their partitioning scheme, and in particular analyzes
\emph{triples} of points. Since here we focus on lower bounds, this
aspect will not be important.

We are now ready to introduce data-dependent hashing.

\paragraph{Data-dependent hashing.}
Intuitively, a data-dependent hashing scheme is one where we can pick
the family $\cal H$ {\em as a function of $P$}, and thus be sensitive
for $P$. An obvious solution
would hence be to choose $\cal H$ to consist of a single hash function
$h$ which is just the Voronoi diagram of the dataset $P$: it will be
$(r, cr, 1, 0)$-sensitive for $P$ and hence $\rho=0$. However this
does not yield a good data strcture for ANN since {\em evaluating} such a hash
function on a query point $q$ is as hard as the original problem!

Hence, ideally, our lower bound model would require that the hash function
is {\em computationally efficient} to evaluate. We do not know how to
formulate such a condition which would not make the question as hard
as circuit lower bounds or high cell-probe lower bounds, which would
be well beyond the scope of this paper.

Instead, we introduce a condition on the hash family
that can be roughly summarized as ``hash functions from the family are succinct''.
For a precise definition and discussion see below. For now, let us point out that
all the known algorithmic results satisfy this condition.

Finally, we are ready to state the main result formally.

\begin{theorem}[Main theorem, full]
  \label{main_thm}
Fix the approximation $c > 1$ to be a constant. Suppose the dataset
$P$ lives in the Hamming space $\zo^d$, where the dataset size $n=|P|$
is such that $d = \omega(\log n)$ as $n$ tends to infinity.  There exist
distance thresholds $r$ and $(c-o(1))r$ with the following property.

Suppose there exist $T$ hash functions $\{h_i\}_{1 \leq i \leq T}$
over $\{0, 1\}^d$ such that for every $n$-point dataset $P$ there
exists a distribution ${\cal H}_P$ over $h_i$'s such that the
corresponding hash family is $(r, (c - o(1))r, p_1, p_2)$-sensitive
for $P$, where $0 < p_1, p_2 < 0.99$.
For any such data-dependent scheme it must either hold that
$
\rho=\frac{\log 1/p_1}{\log 1/p_2}\ge \frac{1}{2c - 1} - o(1)
$
or that
$
        \frac{\log T}{p_1} \geq n^{1 - o(1)}
$.
\end{theorem}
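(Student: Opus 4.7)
The plan is to use a uniformly random dataset $P$ as the hard instance, apply a hypercontractivity-based noise sensitivity bound to each of the $T$ fixed hash functions $h_i$, and then invoke the succinctness of the family to transfer dataset-level guarantees $(p_1, p_2)$ to population-level ones on the uniform distribution $\mu$ over $\{0,1\}^d$.

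For setup, I would let $P$ consist of $n$ i.i.d.\ samples from $\mu$ and choose $r = (d/(2c)) \cdot (1 - o(1))$ so that (i) the theorem's far threshold $(c - o(1)) r$ is at most $d/2 - \omega(\sqrt{d})$, whence with high probability every pair in $P$ has Hamming distance exceeding this threshold, reducing the ``far'' condition of Definition~\ref{data_dependent_def} (up to $1+o(1)$ factors) to a bound on the collision probability of two independent uniform draws; and (ii) the identity $(1-\rho_{\text{corr}})/(1+\rho_{\text{corr}}) = r/(d-r) = 1/(2c-1) - o(1)$ holds for $\rho_{\text{corr}} = 1 - 2r/d$. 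For the close pairs I would take $v$ obtained from $u \in P$ by independently flipping each bit with probability slightly below $r/d$, conditioned so that $\|u-v\|_1 \leq r$ holds always while keeping $\rho_{\text{corr}} = 1 - 2r/d + o(1/d)$; this ensures that the worst-case guarantee $\Pr_h[h(u)=h(v)] \geq p_1$ applies and can be averaged.

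The central analytic step is Bonami-Beckner hypercontractivity applied to each $h_i$: writing the parts as $\{C^{(i)}_j\}$, one obtains $\mathrm{Stab}_{\rho_{\text{corr}}}(\mathbf{1}_{C^{(i)}_j}) \leq \mu(C^{(i)}_j)^{2/(1+\rho_{\text{corr}})}$, and summing over $j$ then applying Jensen's inequality (to the random variable taking value $\mu_j$ with weight $\mu_j$, using concavity of $x \mapsto x^s$ for $s = (1-\rho_{\text{corr}})/(1+\rho_{\text{corr}})$) yields
\[
\alpha_i \;\leq\; \beta_i^{1/(2c-1) - o(1)},
\]
where $\alpha_i$ is the collision probability under the close distribution and $\beta_i$ under two independent uniform draws. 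Averaging over $h \sim \mathcal{H}_P$ and applying Jensen once more on the concave map gives $\mathbb{E}_h[\alpha_h] \leq \bigl(\mathbb{E}_h[\beta_h]\bigr)^{1/(2c-1) - o(1)}$.

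The main obstacle is transferring from empirical to population collision probabilities, which is precisely where succinctness enters. The scheme's guarantees give $\mathbb{E}_h[\alpha^{(P)}_h] \geq p_1$ and $\mathbb{E}_h[\beta^{(P)}_h] \leq p_2 (1+o(1))$, where the superscript $(P)$ denotes empirical averages over $P$. For each of the $T$ hash functions, Bernstein's inequality combined with a union bound shows that $|\alpha^{(P)}_i - \alpha_i| = o(p_1)$ and $|\beta^{(P)}_i - \beta_i| = o(p_2)$ with high probability over random $P$, provided $\log T / p_1 < n^{1-o(1)}$, which is exactly the complement of the alternative in the theorem. Substituting into the averaged inequality yields $\rho \geq 1/(2c-1) - o(1)$, where the assumption $p_2 < 0.99$ lets us absorb the remaining constant slack into the $o(1)$ term. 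The most delicate technical point is the $\beta$ concentration when $p_2$ is very small, where the U-statistic bound weakens; this is handled by observing that $h_i$ with negligible $\beta_i$ would already certify the desired $\rho$ directly, so only the non-trivial regime needs the concentration argument.
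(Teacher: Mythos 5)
Your overall plan reproduces the paper's two-component architecture (random uniform dataset; a per-function, population-level inequality from Fourier analysis; then an empirical-to-population transfer paid for by $\log T$), and your first component is sound: applying small-set expansion to the noise operator at flip rate $\approx 1/(2c)$ and summing over parts with Jensen is an equivalent, arguably cleaner, route to the paper's combination of the random-walk inequality (Lemma~\ref{mnp_main}) with the observation that a walk of length $\tfrac{d}{2}\ln\tfrac{c}{c-1}$ ends at distance concentrated near $d/(2c)$; both give exponent $\tfrac{1-\rho_{\mathrm{corr}}}{1+\rho_{\mathrm{corr}}}=\tfrac{1}{2c-1}-o(1)$, and your conditioning on $\|u-v\|_1\le r$ is handled the same way the paper handles its $2^{-\gamma d}$ tail (though your ``$o(1/d)$'' should just be $o(1)$).

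The genuine gap is in the transfer step. The claim that Bernstein plus a union bound gives $|\beta^{(P)}_i-\beta_i|=o(p_2)$ uniformly over the $T$ functions cannot hold: the theorem places no lower bound on $p_2$ (it may be $2^{-d}$ or even $0$, as for Voronoi-type schemes), whereas with an $n$-point sample one cannot estimate per-function far-collision rates below roughly $\log(Tn)/n$. Your patch --- ``an $h_i$ with negligible $\beta_i$ would already certify the desired $\rho$ directly'' --- does not work as stated: a single function certifies nothing about the scheme's $(p_1,p_2)$; what is actually needed is a bound of the form $\mathbb{E}_{h\sim\mathcal{H}_P}[\beta_h]\le O(p_2)+\theta$, where $\theta\gtrsim \log(Tn)/n$ is an unavoidable additive truncation level (this is exactly the role of the $\max\{\cdot,\theta\}$ device in Lemmas~\ref{conc1_c} and~\ref{conc2_c}). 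This is also precisely where the theorem's second alternative comes from: the end of the argument is $p_1\lesssim(p_2+\theta)^{\frac{1}{2c-1}-o(1)}$, and one can take $\theta\ll p_1$ only under $\log T/p_1\le n^{1-o(1)}$; in your write-up the hypothesis on $\log T/p_1$ is used solely to justify the (false) $o(p_2)$ concentration, so the disjunction in Theorem~\ref{main_thm} never actually emerges from the proof. A milder instance of the same problem affects the claim $|\alpha^{(P)}_i-\alpha_i|=o(p_1)$ when $\alpha_i\gg p_1$ and $p_1\ll n^{-1/2}$; only a one-sided, truncated/multiplicative statement is available (and suffices). Finally, the case $p_1,p_2=\Omega(1)$ is not disposed of by ``$p_2<0.99$'': once the concentration is replaced by its truncated form you incur constant multiplicative losses, which cost $O(1)/\log(1/p_2)$ in $\rho$ --- not $o(1)$ when $p_2$ is a constant. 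The paper resolves this by tensoring the family into $k$-tuples ($T^k$, $p_1^k$, $p_2^k$ with $k$ slowly growing) to reduce to $p_1=o(1)$; some such amplification step is missing from your proposal.
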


\paragraph{Interpreting Theorem~\ref{main_thm}}
Let us explain the conditions and the conclusions of Theorem~\ref{main_thm} in more detail.

We start by interpreting the conclusions. As explained above, the bound
$\rho = \frac{\log 1 / p_1}{\log 1 / p_2} \geq \frac{1}{2c - 1} - o(1)$ directly implies the lower bound
on the query time $n^{\frac{1}{2c - 1} - o(1)}$ for any scheme that is based on data-dependent hashing.

The second bound $\frac{\log T}{p_1} \geq n^{1 - o(1)}$ is a little bit more mysterious. Let us now explain
what it means precisely. The quantity $\log T$ can be interpreted as the \emph{description complexity} of a hash function
sampled from the family. At the same time, if we use a family with collision probability $p_1$ for close points, we need
at least $1 / p_1$ hash tables to achieve constant probability of success. Since in each hash table we evaluate at least one hash function, the quantity $\frac{\log T}{p_1}$ can be interpreted as the lower bound for the
\emph{total space occupied by hash functions we evaluate during each query}.
In all known constructions of (data-independent or data-dependent) LSH families~\cite{im-anntr-98, diim-lshsp-04,
ai-nohaa-06, TT07, ainr-blsh-14, ar-optimal-15} the \emph{evaluation time} of a single hash function is comparable to the space it occupies
(for discussion regarding why is it true for~\cite{ar-optimal-15}, see Appendix~\ref{apx:upper}), thus, under this assumption, we can not achieve query time $n^{1 - \Omega(1)}$, unless $\rho \geq \frac{1}{2c - 1} - o(1)$. On the other hand, we can achieve $\rho = 0$ by considering a data-dependent hash family that consists only of the Voronoi diagram of a dataset (trivially, $p_1 = 1$ and $p_2 = 0$ for this case), thus the conclusion
$\frac{\log T}{p_1} \geq n^{1 - o(1)}$ can not be omitted in general\footnote{For the Voronoi diagram, $\log T \geq n$, since
to specify it, one needs at least $n$ bits.}. Note that the ``Voronoi diagram family'' is very slow to evaluate: to locate a point
we need to solve an instance of exact Nearest Neighbor Search, that is unlikely to be possible to do in strongly sublinear time. Thus,
this family satisfies the above assumption ``evaluation time is comparable to the space''.

We now turn to interpreting the conditions.
We require that $d = \omega(\log n)$\footnote{When ANN for the general dimension $d$ is being solved, one usually first performs some form of the
dimensionality reduction~\cite{jl-elmhs-84,kor-esann-00,dg-eptjl-03}. Since at this stage we do not want distances to be distorted by a factor more than $1 + o(1)$, the target
dimension is precisely $\omega(\log n)$. So, the assumption $d = \omega(\log n)$ in Theorem~\ref{main_thm} in some sense captures
a \emph{truly high-dimensional case}.}. We conjecture that this requirement is necessary and for $d = O(\log n)$
there is an LSH family that gives a better value of $\rho$ (the improvement, of course, would depend on the hidden constant in
the expression $d = O(\log n)$). Moreover, if one steps outside the pure data-dependent LSH framework, in a recent paper~\cite{bdgl15} an improved data structure for ANN for the case
$d = O(\log n)$ is presented, which achieves an improvement similar to what we conjectured above.

\subsection{Techniques and Related Work}

There are two components to our lower bound, i.e., Theorem \ref{main_thm}.

The first component is a lower bound for {\em data-independent} LSH
for a {\em random dataset}. We show that in this case, we must have
$\rho\ge \tfrac{1}{2c-1} - o(1)$. This is in contrast to the lower bound of
\cite{owz-olbls-11}, who achieve a higher lower bound but for the case
when the (far) points are correlated. Our lower bound is closer in
spirit to \cite{mnp-lblsh-07}, who also consider the case when the far
points are random uncorrelated. In fact, this component is a
strengthening of the lower bound from \cite{mnp-lblsh-07}, and is
based crucially on an inequality proved there.

We mention that, in \cite{d-bcits-10}, Dubiner has also considered the
setting of a random dataset for a related problem---finding the
closest pair in a given dataset $P$. Dubiner sets up a certain
related ``bucketing'' model, in which he conjectures the lower bound,
which would imply a $\rho\ge \tfrac{1}{2c-1}$ lower bound for
data-independent LSH for a random set. Dubiner verifies the conjecture
computationally and claims it is proved in a different
manuscript.\footnote{This manuscript does not appear to be available
  at the moment of writing of the present paper.}

We also point out that, for the $\ell_2$ case, the optimal
data-independent lower bound $\rho\ge \frac{1}{2c^2 - 1} - o(1)$
follows from a recent work~\cite{practicalballcarving}. In fact, it
shows almost exact trade-off between~$p_1$ and $p_2$ (not only the
lower bound on $\rho = \frac{\log(1 / p_1)}{\log(1 / p_2)}$).
Unfortunately, the techniques there are really tailored to the
Euclidean case (in particular, a powerful isoperimetric inequality of
Feige and Schechtman is used~\cite{FS02}) and it is unclear how to
extend it to $\ell_1$, as well as, more generally, to~$\ell_p$ for
$1 \leq p < 2$.

Our second component focuses on the data-dependent aspect of the lower
bound. In particular, we prove that if there exists a data-dependent
hashing scheme for a random dataset with a better $\rho$, then in fact
there is also a data-independent such hashing scheme. To accomplish
this, we consider the ``empirical'' average $p_1$ and $p_2$ for a
random dataset, and prove it is close to the average $p_1$ and $p_2$,
for which we can deduce a lower bound from the first component.

In terms of related work, we also must mention the papers
of \cite{ptw-galba-08, ptw-lbnns-10}, who prove cell-probe lower
bounds for the same problem of ANN. In particular, their work utilizes
the lower bound of \cite{mnp-lblsh-07} as well. Their results are
however incomparable to our results: while their results are
unconditional, our model allows us to prove a much higher lower bound,
in particular matching the best algorithms.

\section{Data-independent Lower Bound}

In this section we prove the first component of Theorem
\ref{main_thm}. 
Overall we show a lower bound of $\rho \geq \frac{1}{2c - 1} + o(1)$
for data-independent hash families for random datasets. Our proof is a
strengthening of~\cite{mnp-lblsh-07}. The final statement appears as
Corollary~\ref{str_mnp}. In the second component, we will use a
somewhat stronger statement, Lemma~\ref{dubiner}).

For $u \in \{0, 1\}^d$ and non-negative integer $k$ define a random
variable $W_k(u)$ distributed over $\{0, 1\}^d$ to be the resulting
point of the standard random walk of length $k$ that starts in~$u$ (at
each step we flip a random coordinate).

We build on the following inequality from~\cite{mnp-lblsh-07} that is
proved using Fourier analysis on $\{0, 1\}^d$.

\begin{lemma}[\cite{mnp-lblsh-07}]
  \label{mnp_main}
  For every hash function $h \colon \{0, 1\}^d \to \Zbb$ and every odd positive integer $k$ one has:
  $$
  \underset{\substack{u \sim \{0, 1\}^d \\ v \sim W_k(u)}}{\mathrm{Pr}}[h(u) = h(u)] \leq
  \underset{u, v \sim \{0, 1\}^d}{\mathrm{Pr}}[h(u) = h(u)]^{\frac{\exp(2k / d) - 1}{\exp(2k / d) + 1}}.
  $$
\end{lemma}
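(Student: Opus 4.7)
The plan is to reduce the random-walk collision probability to a standard noise-operator expression on the hypercube, apply Bonami--Beckner hypercontractivity level by level, and then combine the per-level bounds via log-convexity of $L^q$ norms. Concretely, I would first split $h$ into its level sets. Let $f_z$ be the $\{0,1\}$-indicator of $\{x \in \{0,1\}^d : h(x)=z\}$ and set $p_z = \mathbb{E}[f_z]$, so $\sum_z p_z = 1$. Both collision probabilities decompose cleanly as $\Pr_{u,v \sim \{0,1\}^d}[h(u)=h(v)] = \sum_z p_z^2$ and $\Pr_{u,\, v \sim W_k(u)}[h(u)=h(v)] = \sum_z \langle f_z, W_k f_z\rangle$, where $W_k$ is the averaging operator of the walk. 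Expanding $f_z$ in the Fourier basis $\{\chi_S\}_{S \subseteq [d]}$ and using that each step flips a uniformly random coordinate, one computes $W_k \chi_S = (1 - 2|S|/d)^k \chi_S$, so $\langle f_z, W_k f_z\rangle = \sum_S \hat{f_z}(S)^2 (1 - 2|S|/d)^k$.

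The key pointwise inequality is $(1-x)^k \leq e^{-xk}$ for all $x \in [0,2]$, valid precisely because $k$ is odd: for $x \leq 1$ this is the usual bound, while for $1 < x \leq 2$ the left-hand side is non-positive whereas the right-hand side remains positive. Setting $\rho = e^{-k/d} \in (0,1)$ and summing with the non-negative weights $\hat{f_z}(S)^2$ yields $\langle f_z, W_k f_z\rangle \leq \sum_S \rho^{2|S|}\hat{f_z}(S)^2 = \|T_\rho f_z\|_2^2$, where $T_\rho$ is the standard noise operator on the cube. Bonami--Beckner hypercontractivity then gives $\|T_\rho f_z\|_2 \leq \|f_z\|_{1+\rho^2}$, and since $f_z$ is $\{0,1\}$-valued one has $\|f_z\|_q^q = p_z$ for every $q$, so $\|T_\rho f_z\|_2^2 \leq p_z^{2/(1+\rho^2)} = p_z^{1+\alpha}$, where $\alpha = \tfrac{e^{2k/d}-1}{e^{2k/d}+1}$ is exactly the exponent in the statement (the identity $2/(1+e^{-2k/d}) = 1+\alpha$ is a direct computation).

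It then remains to sum the per-level bound and show $\sum_z p_z^{1+\alpha} \leq \bigl(\sum_z p_z^2\bigr)^\alpha$, which, since $\sum_z p_z = 1$, is precisely log-convexity of $\|p\|_q$ in $1/q$ applied between $q=1$ and $q=2$ at the intermediate exponent $1+\alpha$. I expect the main obstacle to be establishing the pointwise spectrum inequality $(1-x)^k \leq e^{-xk}$ on all of $[0,2]$ and using it correctly with mixed-sign Fourier eigenvalues --- this is the single place where the oddness of $k$ is genuinely used, and it is what lets us replace the non-standard random-walk operator by the clean noise operator and thus invoke hypercontractivity. Once this reduction is in hand, the remaining ingredients (the Fourier diagonalization of $W_k$, Bonami--Beckner, and the $L^q$-interpolation sum over $z$) are all standard.
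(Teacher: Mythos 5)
Your proof is correct: the decomposition into level sets, the diagonalization of the $k$-step walk with eigenvalues $(1-2|S|/d)^k$, the pointwise bound $(1-2|S|/d)^k \le e^{-2k|S|/d}$ (where oddness of $k$ handles $|S|>d/2$), Bonami--Beckner with $\|T_\rho f_z\|_2 \le \|f_z\|_{1+\rho^2}$, and the final H\"older step $\sum_z p_z^{1+\alpha} \le \bigl(\sum_z p_z^2\bigr)^{\alpha}$ all check out, and the exponent arithmetic matches the statement. The paper itself does not prove this lemma --- it imports it from the cited reference, noting only that it is ``proved using Fourier analysis on $\{0,1\}^d$'' --- and your argument is essentially a faithful reconstruction of that Fourier-analytic/hypercontractivity proof.
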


\aanote{sounds ok?}
The above inequality can be thought of as a lower bound on $\rho$
already. In particular, the left-hand-side quantity is probability of
collision of a point and a point generated via a random walk of length
$k$ from it. The right-hand side corresponds to collision of random
independent points, which are usually at distance $d/2$. One already
obtain a lower bound on $\rho$ by considering $k=d/2c$.

To strengthen the lower bound of \cite{mnp-lblsh-07}, we analyze
carefully the distance between the endpoints of a random walk in the
hypercube. In particular the next (somewhat folklore) lemmas show that
this distance is somewhat smaller than the (trivial) upper bound of
$k$.  Denote $X_k$ the distance $\|W_k(u) - u\|_1$. Note that the
distribution of $X_k$ does not depend on a particular starting point
$u$.

\begin{lemma}
  \label{walk_exp}
  For every $k$ one has
  $$
  \mathrm{E}[X_k] = \frac{d}{2} \cdot \left(1 - \left(1 - \frac{2}{d}\right)^k\right).
  $$
\end{lemma}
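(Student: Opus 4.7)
The plan is to give a short, essentially one-line derivation of $\mathrm{E}[X_k]$, for which I see two natural routes. The cleaner one, which I would write up first, is a coordinate-by-coordinate calculation using linearity of expectation. Split $X_k = \sum_{i=1}^{d} Y_i$, where $Y_i \in \{0,1\}$ is the indicator that the $i$th coordinate of $W_k(u)$ differs from that of $u$. Then $Y_i = 1$ exactly when coordinate $i$ has been chosen an odd number of times in the $k$ steps of the walk.

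Let $N_i$ be the number of times coordinate $i$ is selected. Marginally, $N_i$ is $\mathrm{Binomial}(k, 1/d)$. A standard generating-function identity for a binomial random variable states that $\Pr[N_i \text{ is odd}] = \tfrac{1}{2}(1 - (1 - 2/d)^k)$ (obtained by evaluating $\tfrac{1}{2}((1-1/d+1/d)^k - (1-1/d-1/d)^k)$). Hence
\[
\mathrm{E}[Y_i] = \frac{1 - (1 - 2/d)^k}{2},
\]
and summing over $i$ gives the claimed formula. There is no obstacle; the only thing to be careful about is just the elementary observation that flipping coordinate $i$ an odd number of times toggles its value.

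If one prefers an alternative proof, I would instead set up a one-step recurrence directly on $\mathrm{E}[X_k]$. Conditional on $X_k = x$, one step flips a coordinate of disagreement with probability $x/d$ (decreasing $X$ by $1$) and a coordinate of agreement with probability $(d-x)/d$ (increasing $X$ by $1$), so $\mathrm{E}[X_{k+1}\mid X_k] = X_k(1 - 2/d) + 1$. Taking expectations yields the linear recursion $\mathrm{E}[X_{k+1}] = (1 - 2/d)\,\mathrm{E}[X_k] + 1$ with $\mathrm{E}[X_0] = 0$, whose fixed point is $d/2$. Solving it gives $\mathrm{E}[X_k] = \tfrac{d}{2}(1 - (1 - 2/d)^k)$, matching the first proof and confirming that the starting point $u$ does not enter the distribution of $X_k$.
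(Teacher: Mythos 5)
Your proposal is correct, and your primary argument takes a genuinely different route from the paper. The paper proves the lemma exactly by your ``alternative'' method: it conditions on $X_{k-1}=t$ to get $\mathrm{E}[X_k\mid X_{k-1}=t]=(1-\tfrac{2}{d})t+1$, takes expectations to obtain the linear recursion $\mathrm{E}[X_k]=(1-\tfrac{2}{d})\mathrm{E}[X_{k-1}]+1$ with $\mathrm{E}[X_0]=0$, and sums the resulting geometric series. Your main proof instead decomposes $X_k=\sum_{i=1}^d Y_i$ over coordinates, notes that $Y_i=1$ iff coordinate $i$ is selected an odd number of times, and uses the standard identity $\Pr[\mathrm{Binomial}(k,1/d)\ \text{odd}]=\tfrac12\bigl(1-(1-\tfrac{2}{d})^k\bigr)$; this is equally short, gives the per-coordinate flip probability as a byproduct (hence the full distribution of $X_k$ as a sum of exchangeable indicators), and makes the independence of the starting point $u$ manifest from the outset, whereas the recurrence route is more self-contained in that it needs no generating-function identity, only a one-step conditional computation. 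Either write-up would be a valid replacement for the paper's proof.
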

\begin{proof}
  We have that
  \begin{multline*}
  \mathrm{E}[X_k \mid X_{k-1} = t] = \mathrm{Pr}[X_k = t - 1] \cdot (t - 1) + \mathrm{Pr}[X_k = t+1] \cdot (t + 1)
  \\ = \frac{t}{d} \cdot (t - 1) + \left(1 - \frac{t}{d}\right) \cdot (t + 1) =
  \left(1 - \frac{2}{d}\right) \cdot t + 1.
  \end{multline*}
  Thus,
  $$
  \mathrm{E}[X_k] = \left(1 - \frac{2}{d}\right) \cdot \mathrm{E}[X_{k-1}] + 1.
  $$
  Since $\mathrm{E}[X_0] = 0$, we obtain that
  $$
  \mathrm{E}[X_k] = \sum_{i=0}^{k-1} \left(1 - \frac{2}{d}\right)^i = \frac{d}{2} \cdot \left(1 - \left(1 - \frac{2}{d}\right)^k\right).
  $$
\end{proof}

We can now prove that the value of $X_k$ indeed concentrates well
around the expectation, using concentration inequalities.

\begin{lemma}
  \label{walk_concentr}
  For every $k$ and every $t > 0$ one has
  $$
  \mathrm{Pr}\left[X_k \geq \mathrm{E}[X_k] + t \cdot \sqrt{k}\right] \leq e^{-t^2 / 4}.
  $$
\end{lemma}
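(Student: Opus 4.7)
The plan is to express $X_k$ as a bounded-differences function of the $k$ i.i.d.\ uniform coordinate choices $B_1, \ldots, B_k \in \{1, \ldots, d\}$ that drive the walk (here $B_t$ is the coordinate flipped at step $t$). Concretely, $X_k$ equals the number of coordinates appearing an odd number of times in the multiset $\{B_1, \ldots, B_k\}$; swapping a single $B_t$ for a different value toggles the parities of at most two coordinates, and hence changes $X_k$ by at most $2$.

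With this observation in hand, the natural tool is the Azuma--Hoeffding / McDiarmid bounded-differences inequality applied to the Doob martingale $Y_i = \mathrm{E}[X_k \mid B_1, \ldots, B_i]$. The conditional range of each increment $Y_i - Y_{i-1}$ (viewed as a function of $B_i$, with $B_1, \ldots, B_{i-1}$ fixed) is at most $2$, so Hoeffding's lemma gives $\mathrm{E}[e^{\lambda (Y_i - Y_{i-1})} \mid B_1, \ldots, B_{i-1}] \leq e^{\lambda^2/2}$. Chaining these bounds over $i = 1, \ldots, k$ and optimizing $\lambda$ in the Markov bound yields $\Pr[X_k - \mathrm{E}[X_k] \geq \mu] \leq e^{-\mu^2/(2k)}$. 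Setting $\mu = t \sqrt{k}$ gives a sub-Gaussian tail that is in fact stronger than the claimed $e^{-t^2/4}$; alternatively, one can obtain exactly the stated bound by applying the cruder Azuma inequality to the absolute bounded differences $|Y_i - Y_{i-1}| \leq 2$ (which sacrifices a factor of $4$ in the exponent).

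I expect no real obstacle here: once one identifies the correct ``source of independence'' (the per-step coordinate choices $B_t$, rather than the $\pm 1$ step increments $X_i - X_{i-1}$), the concentration is essentially mechanical. The only subtle point, and the main thing to watch for, is that Hoeffding cannot be applied directly to the $\pm 1$ increments of the walk, since conditional on $X_{i-1}$ these increments are biased (with bias $1 - 2 X_{i-1}/d$) and correlated across $i$ through the Markov structure analyzed in Lemma~\ref{walk_exp}. Routing the argument through the $B_t$'s sidesteps this entirely and gives the clean sub-Gaussian bound.
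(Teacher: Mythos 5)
Your proposal is correct and matches the paper's argument: the paper also writes $X_k$ as a function of the per-step coordinate indices $Y_1,\ldots,Y_k$, notes that changing one of them shifts $X_k$ by at most $2$, and applies McDiarmid's bounded-differences inequality with $\eps = t\sqrt{k}$ to get $e^{-t^2/4}$. Your extra observation that the Doob-martingale/Hoeffding-lemma route yields the sharper exponent $e^{-t^2/2}$ is a valid refinement but is not needed for the stated bound.
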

\begin{proof}
  For $t \geq 1$ define $Y_t$ to be the index of the coordinate that
  got flipped at time $t$. Obviously, we have that $X_k$ is a
  (deterministic) function of $Y_1,\ldots Y_k$. Furthermore, changing
  one $Y_t$ changes $X_k$ by at most 2. Hence we can apply the
  McDiarmid's inequality to $X_k=f(Y_1,\ldots Y_k)$:
\begin{theorem}[McDiarmid's inequality]
Let $Y_1,\ldots Y_k$ be independent random variables and
$X=f(Y_1,\ldots Y_k)$ such that changing variable $Y_t$ only changes
the value by at most $c_t$. Then we have that
$$
\Pr[X\ge \E{}{X}+\eps]\le \exp\left(-\frac{\eps^2}{\sum_{i=1}^k c_i^2}\right).
$$
\end{theorem}

Hence we obtain that  $$
  \mathrm{Pr}\left[X_k \geq \mathrm{E}[X_k] + \eps\right] \leq \exp\left(-\frac{\eps^2}{4k}\right).
  $$
  Substituting $\eps = t \cdot \sqrt{k}$, we get the result.
\end{proof}

Note that choosing $k\approx \tfrac{d}{2}\cdot \ln\tfrac{c}{c-1}$ will
mean that distance $X_k=\|W_k(u)-u\|_1$ is now around $d/2c$, i.e., we
can actually use random walks longer than the ones considered in
\cite{mnp-lblsh-07}. Indeed, from Lemma~\ref{walk_exp} and
Lemma~\ref{walk_concentr} one can immediately conclude the following
corollary.

\begin{corollary}
  \label{walk_corr}
  Let $c > 1$ be a fixed constant. Suppose that $\gamma = \gamma(d) > 0$ is such that $\gamma = o(1)$
  as $d \to \infty$. Then, there exists $\alpha = \alpha(d)$ such that
  $$
  \alpha(d) = \frac{1}{2} \cdot \ln \frac{c}{c - 1} - o_{c,\gamma}(1)
  $$
  that satisfies the following:
  $$
  \mathrm{Pr}\left[X_{\alpha \cdot d} > \frac{d}{2c}\right] \leq 2^{-\gamma \cdot d}.
  $$
\end{corollary}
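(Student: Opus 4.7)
The corollary is a direct computational combination of the expectation formula in Lemma~\ref{walk_exp} and the McDiarmid concentration in Lemma~\ref{walk_concentr}. My plan is to choose $\alpha(d)$ to be the nominal value $\frac{1}{2}\ln\frac{c}{c-1}$ minus a small slack $\delta = \delta(c,\gamma)$ that will absorb the concentration radius and thus buy a margin between the typical endpoint distance and the target $d/(2c)$.

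First I would estimate the expectation. From Lemma~\ref{walk_exp} and the expansion $\ln(1 - 2/d) = -2/d - O(1/d^2)$, we get
\[
\left(1 - \frac{2}{d}\right)^{\alpha d} = e^{-2\alpha}\bigl(1 + O(\alpha/d)\bigr),
\]
so that $\mathrm{E}[X_{\alpha d}] = \frac{d}{2}\bigl(1 - e^{-2\alpha}\bigr) + O(1)$ whenever $\alpha$ is bounded. Setting $\alpha = \frac{1}{2}\ln\frac{c}{c-1} - \delta$ gives $e^{-2\alpha} = \frac{c-1}{c}\,e^{2\delta}$, hence
\[
\mathrm{E}[X_{\alpha d}] = \frac{d}{2c} - \frac{d(c-1)}{2c}\bigl(e^{2\delta} - 1\bigr) + O(1)
= \frac{d}{2c} - \Theta_c(d\delta) + O(1),
\]
provided $\delta$ is small enough that $e^{2\delta} - 1 \asymp \delta$.

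Next I would apply Lemma~\ref{walk_concentr}. To make the failure probability at most $2^{-\gamma d}$ it suffices to pick $t = 2\sqrt{\gamma d \ln 2}$; the corresponding additive deviation from the mean is $t\sqrt{k} = 2d\sqrt{\alpha \gamma \ln 2} = O_c(d\sqrt{\gamma})$. Combining with the expectation bound, the event $X_{\alpha d} > d/(2c)$ is contained in the McDiarmid tail provided
\[
\frac{d}{2c} - \Theta_c(d\delta) + O(1) + O_c\!\left(d\sqrt{\gamma}\right) \;\le\; \frac{d}{2c}.
\]
This is achieved by choosing $\delta := K_c \sqrt{\gamma}$ for a sufficiently large constant $K_c$ depending only on $c$. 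Since $\gamma = o(1)$, we have $\delta = o_{c,\gamma}(1)$, so $\alpha(d) = \frac{1}{2}\ln\frac{c}{c-1} - o_{c,\gamma}(1)$, as required.

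The only mildly delicate step is making sure the approximations are quantitative rather than asymptotic: one must keep the $O(1)$ error from the expectation expansion and the $O_c(d\sqrt{\gamma})$ concentration radius explicit so that a single choice $\delta = K_c\sqrt{\gamma}$ dominates both for all sufficiently large $d$. No real obstacle is expected — the estimates are classical and the budget $\Theta_c(d\delta)$ obtained by trimming $\alpha$ comfortably exceeds the $O_c(d\sqrt{\gamma})$ deviation once $K_c$ is chosen appropriately.
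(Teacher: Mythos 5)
Your proposal is correct and follows exactly the route the paper intends: the paper offers no separate proof, stating that the corollary follows immediately from Lemma~\ref{walk_exp} and Lemma~\ref{walk_concentr}, and your calculation (approximating $(1-2/d)^{\alpha d}\approx e^{-2\alpha}$, trimming $\alpha$ by a slack $\delta$, and applying the concentration bound with $t=2\sqrt{\gamma d\ln 2}$) is precisely that missing computation. One small quantitative caveat: if $\gamma$ decays so fast that $d\sqrt{\gamma}=O(1)$, the choice $\delta=K_c\sqrt{\gamma}$ no longer dominates the $O(1)$ error from the expectation expansion, but taking instead $\delta=K_c\bigl(\sqrt{\gamma}+d^{-1/2}\bigr)$ (still $o_{c,\gamma}(1)$) repairs this without changing anything else.
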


We are now ready to prove the main lemma of this section, which will
be used in the later section on data-dependent hashing. We need to
introduce two more definitions. We define
``average $p_1$'' as:
$$
\zeta(c, d, \alpha, h) := 
\underset{\substack{u \sim \{0, 1\}^d\\v \sim W_{\alpha \cdot d}(u)}}{\mathrm{Pr}}
\left[h(u) = h(v) \middle| \|u - v\|_1 \leq \frac{d}{2c}\right]
$$
for $c > 1$, positive integer $d$, $\alpha > 0$ and a hash function $h \colon \{0, 1\}^d \to \Zbb$.
Similarly, we define the ``average $p_2$'' as
$$
\eta(d, \beta, h) :=
\underset{u, v \sim \{0, 1\}^d}{\mathrm{Pr}}\left[h(u) = h(v), \|u - v\|_1 > \left(\frac12 - \beta\right) \cdot d\right]
$$
for positive integer $d$, $\beta > 0$ and a hash function $h \colon \{0, 1\}^d \to \Zbb$.

\begin{lemma}
  \label{dubiner}
  Let $c > 1$ be a fixed constant.
  Suppose that $\gamma = \gamma(d) > 0$ is such that $\gamma = o(1)$
  as $d \to \infty$. Then, there exist $\alpha = \alpha(d)$, $\beta = \beta(d)$ and $\rho = \rho(d)$ such that:
  \begin{align*}
    \alpha &= \frac{1}{2} \cdot \ln \frac{c}{c - 1} - o_{c,\gamma}(1),\\
    \beta &= o_{\gamma}(1),\\
    \rho &= \frac{1}{2c - 1} - o_{c, \gamma}(1)
  \end{align*}
  as $d \to \infty$ such that, for every $d$ and every hash function $h \colon \{0, 1\}^d \to \Zbb$,
  one has
  $$
  \zeta(c, d, \alpha, h) \leq \eta(d, \beta, h)^{\rho} + 2^{-\gamma \cdot d}.
  $$
\end{lemma}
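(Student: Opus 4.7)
The plan is to combine the Fourier-analytic inequality of Lemma~\ref{mnp_main} with the walk-length concentration from Corollary~\ref{walk_corr} and a standard Hoeffding bound on random Hamming distances; these will let me replace the unconditional collision probabilities on both sides of MNP by the conditional quantities $\zeta$ and $\eta$, at the cost of negligible error. To set parameters, fix a constant $K = K(c)$ (pinned down at the end) and set $\gamma_0 := K\gamma$, so $\gamma_0 = o(1)$. Applying Corollary~\ref{walk_corr} with $\gamma_0$ in place of $\gamma$ produces some $\alpha = \frac{1}{2}\ln\frac{c}{c-1} - o_{c,\gamma}(1)$ for which, after rounding $k := \alpha d$ to the nearest odd positive integer (an $O(1/d)$ perturbation, absorbed by $o_{c,\gamma}(1)$), one still has $\Pr[X_k > d/(2c)] \leq 2^{-\gamma_0 d}$. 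Set $\beta := \sqrt{(\gamma_0 \ln 2)/2} = o_\gamma(1)$, so that Hoeffding's inequality applied to $\|u-v\|_1$ (a sum of $d$ i.i.d.\ fair indicators with mean $d/2$) yields $\Pr_{u,v}[\|u-v\|_1 \leq (\tfrac12 - \beta)d] \leq e^{-2\beta^2 d} \leq 2^{-\gamma_0 d}$.

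With these parameters I apply Lemma~\ref{mnp_main} to the odd integer $k$: writing $p := \Pr_{u,v}[h(u)=h(v)]$ and $\theta := (e^{2k/d}-1)/(e^{2k/d}+1)$, it gives $\Pr_{u,v\sim W_k(u)}[h(u)=h(v)] \leq p^\theta$, and the chosen $\alpha$ makes $\theta = \frac{1}{2c-1} - o_{c,\gamma}(1)$. The left-hand side is at least $\zeta \cdot \Pr[X_k \leq d/(2c)] \geq \zeta(1 - 2^{-\gamma_0 d})$ by the definition of $\zeta$ together with the walk-length tail bound. The right-hand side input satisfies $p \leq \eta + \Pr_{u,v}[\|u-v\|_1 \leq (\tfrac12 - \beta)d] \leq \eta + 2^{-\gamma_0 d}$ by the definition of $\eta$ and Hoeffding. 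Combining gives $\zeta(1 - 2^{-\gamma_0 d}) \leq (\eta + 2^{-\gamma_0 d})^\theta$.

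Finally, because $\theta \in (0,1)$ the function $x \mapsto x^\theta$ is subadditive on $[0,\infty)$, so $(\eta + 2^{-\gamma_0 d})^\theta \leq \eta^\theta + 2^{-\gamma_0 \theta d}$. Dividing through by $1 - 2^{-\gamma_0 d} = 1 - o(1)$ and using $\eta^\theta \leq 1$ to absorb the lower-order cross terms yields $\zeta \leq \eta^\theta + O(2^{-\gamma_0 \theta d})$. Choosing $K$ large enough that $\gamma_0 \theta \geq 2\gamma$ for all $d$ (possible because $\theta$ is bounded below by a positive constant depending only on $c$) makes $O(2^{-\gamma_0 \theta d}) \leq 2^{-\gamma d}$ for all sufficiently large $d$, and setting $\rho := \theta$ then delivers the required bound $\zeta \leq \eta^\rho + 2^{-\gamma d}$ with $\rho = \frac{1}{2c-1} - o_{c,\gamma}(1)$. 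The main obstacle is bookkeeping: the multiplicative slack $(1 - 2^{-\gamma_0 d})^{-1}$ from the conditioning step and the residue $2^{-\gamma_0 \theta d}$ from subadditivity must both fit inside the single $2^{-\gamma d}$ error budget, which is what forces $\gamma_0$ to exceed $\gamma$ by a $c$-dependent factor.
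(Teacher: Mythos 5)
Your proposal is correct and follows essentially the same route as the paper's proof: choose $\alpha$ via the random-walk concentration (Corollary~\ref{walk_corr}), choose $\beta$ via a Chernoff/Hoeffding bound, apply Lemma~\ref{mnp_main} to the odd walk length, and combine the three estimates using subadditivity of $x \mapsto x^{\rho}$. If anything, your explicit boost $\gamma_0 = K(c)\gamma$ is more careful bookkeeping than the paper's terse ``combine the three inequalities'' step, since the tail terms get raised to the power $\rho < 1$ and, with tails only at level $2^{-\gamma d}/2$ as in the paper, one would get an error of order $2^{-\rho\gamma d}$ rather than the claimed $2^{-\gamma d}$.
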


\begin{proof}
  We use Lemma~\ref{walk_corr} to choose
  $
  \alpha = \frac{1}{2} \cdot \ln \frac{c}{c - 1} + o_{c, \gamma}(1)
  $ such that $\alpha \cdot d$ is an odd integer and
  \begin{equation}
    \label{tail1}
  \mathrm{Pr}\left[X_{\alpha \cdot d} \geq \frac{d}{2c}\right] < \frac{2^{-\gamma \cdot d}}{2}.
  \end{equation}
  We can choose $\beta = o_{\gamma}(1)$ so that
  \begin{equation}
    \label{tail2}
  \underset{u, v \sim \{0, 1\}^d}{\mathrm{Pr}}\left[\|u - v\| < \left(\frac{1}{2} - \beta\right) \cdot d\right] < \frac{2^{-\gamma \cdot d}}{2}.
  \end{equation}
  (This follows from the standard Chernoff-type bounds.)

  Now we apply Lemma~\ref{mnp_main} and get
  \begin{equation}
    \label{mnp_main_app}
    \underset{\substack{u \sim \{0, 1\}^d\\ v \sim W_{\alpha \cdot d}(u)}}{\mathrm{Pr}}\left[h(u) = h(v)\right] \leq \underset{u, v \sim \{0, 1\}^d}{\mathrm{Pr}}\left[h(u) = h(v)\right]^{\rho},
  \end{equation}
  where
  $$
  \rho = \frac{\exp(2 \alpha) - 1}{\exp(2 \alpha) + 1} = \frac{1}{2c - 1} - o_{c, \gamma}(1).
  $$
  Finally, we combine~(\ref{mnp_main_app}), (\ref{tail1}) and~(\ref{tail2}), and get the desired inequality.
\end{proof}

The following corollary shows how the above lemma implies a lower
bound on data-independent LSH.

\begin{corollary}
  \label{str_mnp}
  For every $c > 1$ and $\gamma = \gamma(d) > 0$ such that $\gamma = o(1)$ there exists $\beta = \beta(d) > 0$ with
  $\beta = o_{\gamma}(1)$ such that
  if $\mathcal{H}$ is a data-independent $(d / (2c), (1/2 - \beta) d, p_1, p_2)$-sensitive family, then
  $$
  p_1 \leq p_2^{\frac{1}{2c - 1} - o_{c, \gamma}(1)} + 2^{-\gamma \cdot d}.
  $$
\end{corollary}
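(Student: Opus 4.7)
The plan is to apply Lemma~\ref{dubiner} pointwise to each hash function $h$ in the support of $\mathcal{H}$ and then average over $h \sim \mathcal{H}$. First I would fix $\gamma = \gamma(d) = o(1)$ and let $\alpha, \beta, \rho$ be the parameters supplied by Lemma~\ref{dubiner}, so that for every fixed $h$ we have
\[
\zeta(c, d, \alpha, h) \leq \eta(d, \beta, h)^{\rho} + 2^{-\gamma d}.
\]

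Next I would translate the data-independent LSH guarantees into bounds on $\mathbb{E}_h[\zeta]$ and $\mathbb{E}_h[\eta]$. For the close-pair side, by swapping the order of expectation, $\mathbb{E}_{h}[\zeta(c,d,\alpha,h)]$ equals the probability (over $h$, $u$, and a $k=\alpha d$-step walk to $v$) that $h(u)=h(v)$, conditioned on $\|u-v\|_1 \leq d/(2c)$. Conditioned on any realization of $(u,v)$ in this event, the sensitivity condition forces $\Pr_h[h(u)=h(v)] \geq p_1$, so $\mathbb{E}_h[\zeta(c,d,\alpha,h)] \geq p_1$. For the far-pair side, $\mathbb{E}_h[\eta(d,\beta,h)] = \Pr_{h,u,v}[h(u)=h(v) \text{ and } \|u-v\|_1 > (1/2-\beta)d]$; conditioning on the distance event (which has probability at most $1$), the sensitivity condition yields a collision probability of at most $p_2$, so $\mathbb{E}_h[\eta(d,\beta,h)] \leq p_2$.

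Now I would take expectations over $h \sim \mathcal{H}$ in the inequality from Lemma~\ref{dubiner} and invoke Jensen's inequality: since $\rho = \frac{1}{2c-1} - o_{c,\gamma}(1) < 1$ the map $x \mapsto x^{\rho}$ is concave on $[0,1]$, so
\[
\mathbb{E}_h[\eta(d,\beta,h)^{\rho}] \leq \mathbb{E}_h[\eta(d,\beta,h)]^{\rho} \leq p_2^{\rho}.
\]
Chaining the three inequalities gives
\[
p_1 \;\leq\; \mathbb{E}_h[\zeta(c,d,\alpha,h)] \;\leq\; \mathbb{E}_h[\eta(d,\beta,h)^{\rho}] + 2^{-\gamma d} \;\leq\; p_2^{\rho} + 2^{-\gamma d},
\]
which is exactly the desired bound.

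There is no serious obstacle here; Lemma~\ref{dubiner} does all the analytic work. The only points to be careful about are (i) the direction of the conditioning in the definition of $\zeta$ versus the joint-probability form of $\eta$, so that one really gets $\geq p_1$ and $\leq p_2$ on the right sides, and (ii) invoking Jensen's inequality in the correct direction, which requires $\rho \leq 1$ (guaranteed by the form of $\rho$ from Lemma~\ref{dubiner}).
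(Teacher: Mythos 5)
Your proof is correct and follows exactly the paper's argument: bound $p_1 \leq \mathrm{E}_{h \sim \mathcal{H}}[\zeta(c,d,\alpha,h)]$ and $p_2 \geq \mathrm{E}_{h \sim \mathcal{H}}[\eta(d,\beta,h)]$ via the sensitivity conditions, apply Lemma~\ref{dubiner} pointwise, and average using Jensen's inequality with $0 < \rho \leq 1$. You merely spell out the two observations (swapping the order of expectation) that the paper states without proof.
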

\begin{proof}
  We observe that for every $\alpha, \beta > 0$:
  \begin{itemize}
  \item $p_1 \leq \underset{h \sim \mathcal{H}}{\mathrm{E}}\left[\zeta(c, d, \alpha, h)\right]$;
  \item $p_2 \geq \underset{h \sim \mathcal{H}}{\mathrm{E}}\left[\eta(d, \beta, h)\right]$.
  \end{itemize}
  Now we apply Lemma~\ref{dubiner} together with the following application of Jensen's inequality:
  $$
  \underset{h \sim \mathcal{H}}{\mathrm{E}}\left[\eta(d, \beta, h)^{\rho}\right]
  \leq 
  \underset{h \sim \mathcal{H}}{\mathrm{E}}\left[\eta(d, \beta, h)\right]^{\rho},
  $$
  since $0 < \rho \leq 1$.
\end{proof}

\section{Data-Dependent Hashing}

We now prove the second component of the main Theorem \ref{main_thm}
proof. In particular we show that a very good data-dependent hashing
scheme would refute Lemma~\ref{dubiner} from the previous section.

\subsection{Empirical Probabilities of Collision}

For a particular dataset $P$, we will be interested in {\em empirical}
probabilities $p_1, p_2$ --- i.e., the equivalents of $\zeta, \mu$ for
a given set $P$ --- defined as follows. Let $0 < \delta(d) < 1/3$ be
some function. Let $P$ be a random set of points from $\{0, 1\}^d$ of
size $2^{\delta(d) \cdot d}$. The {\em empirical} versions of $\zeta$
and $\eta$ with respect to $P$ are:
\begin{align*}
\widehat{\zeta}(c, d, \alpha, h, P) & := 
\underset{\substack{u \sim P\\v \sim W_{\alpha \cdot d}(u)}}{\mathrm{Pr}}
\left[h(u) = h(v) \middle| \|u - v\|_1 \leq \frac{d}{2c}\right]\\
\widehat{\eta}(d, \beta, h, P) & :=
\underset{u, v \sim P}{\mathrm{Pr}}\left[h(u) = h(v), \|u - v\|_1 > \left(\frac12 - \beta\right) \cdot d\right].
\end{align*}

We now what to prove that, for a random dataset $P$, the empirical
$\zeta, \mu$ are close to the true averages. For this we will need the
following auxiliary lemma.

\begin{lemma}
  \label{conc2}
  Let $M$ be an $n \times n$ symmetric matrix with entries from $[0; 1]$ and average $\eps$.
  Let $M'$ be a principal $n^\delta \times n^\delta$ submatrix of $M$ sampled uniformly with replacement.
  Then, for every $\theta > 0$, the probability that the maximum of the average over $M'$ and $\theta$
  does not lie in $[1 / 2; 2] \cdot \max\{\eps, \theta\}$ is at most
  $n^{\delta} \cdot 2^{-\Omega(\theta n^{\delta})}$.
\end{lemma}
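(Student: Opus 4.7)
The plan is to analyze the random sum $S := \sum_{a,b=1}^{m} M_{i_a,i_b}$, where $m = n^\delta$ and the sampled indices $i_1,\dots,i_m$ are iid uniform on $[n]$; note $\mathrm{avg}(M') = S/m^2$ and $\EX[S] = m^2\eps + O(m)$ (the $O(m)$ being the diagonal contribution), so it suffices to show $|S - \EX[S]| \le \tfrac{1}{4}\, m^2\max\{\eps,\theta\}$ up to the claimed failure probability. The bad event splits into an upper tail $\mathrm{avg}(M') > 2\max\{\eps,\theta\}$ and (only when $\eps > 2\theta$) a lower tail $\mathrm{avg}(M') < \eps/2$, both of which are captured by this additive deviation.

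Because $S$ is a quadratic form in the iid indices rather than a sum of independent terms, my plan is to use the Doob martingale $Z_t := \EX[S \mid i_1,\dots,i_t]$, with $Z_0 = \EX[S]$ and $Z_m = S$. A direct computation by unconditioning over $i_t$ shows that the increment $\Delta_t := Z_t - Z_{t-1}$ equals
$$
(M_{i_t,i_t} - D) + 2\sum_{b<t}(M_{i_t,i_b} - r_{i_b}/n) + 2(m-t)(r_{i_t}/n - \eps),
$$
where $r_i := \sum_j M_{ij}$ is a row sum and $D := \tfrac{1}{n}\sum_i M_{ii}$. Crucially, $M_{ij}\in[0,1]$ gives $M_{ij}^2 \le M_{ij}$, which lets me bound
$$
\EX[\Delta_t^2 \mid i_{<t}] \;=\; O\!\left(1 \;+\; m\sum_{b<t} r_{i_b}/n \;+\; (m-t)^2 \eps\right).
$$
On the good event $\mathcal{E}$ that every prefix $\sum_{b\le t} r_{i_b}/n$ is $O(t\max\{\eps,\theta\})$, this yields total conditional variance $V := \sum_t \EX[\Delta_t^2 \mid i_{<t}] = O(m^3\max\{\eps,\theta\})$ and per-step bound $|\Delta_t| \le R = O(m)$.

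Freedman's martingale Bernstein inequality, applied with $\tau = \tfrac{1}{4}m^2\max\{\eps,\theta\}$ (and symmetrically with $\tau = \tfrac{1}{2}m^2\eps$ for the lower tail when $\eps > 2\theta$), then gives
$$
\Pr\!\bigl[\,|S - \EX[S]| \ge \tau\,\bigr] \;\le\; 2\exp\!\left(-\frac{\tau^2}{2V + 2R\tau/3}\right) \;=\; 2^{-\Omega(\theta m)},
$$
since $V + R\tau = O(m^3\max\{\eps,\theta\})$ and $\tau^2 = \Omega(m^4\max\{\eps,\theta\}^2)$. The event $\mathcal{E}$ itself is handled by a separate Bernstein bound on the iid partial sums $\sum_{b\le t} r_{i_b}/n$ (each term in $[0,1]$ with mean $\eps$), which yields $\Pr[\mathcal{E}^c] \le m \cdot 2^{-\Omega(\theta m)}$ after a union bound over $t \le m$. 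Combining, the total failure probability is $O(m)\cdot 2^{-\Omega(\theta m)} = n^\delta \cdot 2^{-\Omega(\theta n^\delta)}$, matching the claim.

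The main obstacle is controlling the conditional variances $\EX[\Delta_t^2 \mid i_{<t}]$ simultaneously across all $t$: the naive bound $|\Delta_t|^2 \le O(m^2)$ gives only $V = O(m^3)$ and hence a subgaussian tail $\exp(-\Omega(m\max\{\eps,\theta\}^2))$, which is too weak when $\theta$ is small. Exploiting $\EX[M_{ij}^2] \le \EX[M_{ij}]$ to save a factor of $\max\{\eps,\theta\}$ in the variance is what enables the desired $\exp(-\Omega(\theta m))$ tail, and is also what forces the auxiliary concentration of the sampled row averages on $\mathcal{E}$ (whose union-bound cost of $m$ yields the $n^\delta$ prefactor in the lemma).
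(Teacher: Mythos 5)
Your route is genuinely different from the paper's. The paper's proof is a two-line reduction to the scalar bound (Lemma~\ref{conc1}): conditioned on a sampled row index, the entries of that row of $M'$ are i.i.d.\ with mean equal to that row's average, and the sampled row averages are themselves i.i.d.\ in $[0,1]$ with mean $\eps$; applying Lemma~\ref{conc1} per row and union-bounding over the $n^\delta$ rows gives the stated bound (the $n^\delta$ prefactor is exactly this union bound). You instead control the full quadratic form $S=\sum_{a,b}M_{i_a,i_b}$ via its Doob martingale and Freedman's inequality. Your increment formula for $\Delta_t$ is correct, and the key variance savings via $M_{ij}^2\le M_{ij}$ (together with symmetry of $M$, which is what makes $\EX_{i_t}[M_{i_t,i_b}]=r_{i_b}/n$) is sound; once the variance is controlled, the Freedman step does give $2^{-\Omega(\theta m)}$, and in fact your argument would yield the conclusion even without the $n^\delta$ prefactor. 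The price is substantially heavier machinery than the paper needs.

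There is, however, one step that fails as written: the auxiliary event $\mathcal{E}$ that \emph{every} prefix satisfies $\sum_{b\le t}r_{i_b}/n=O(t\max\{\eps,\theta\})$ does not hold with probability $1-m\cdot2^{-\Omega(\theta m)}$. For short prefixes the Chernoff exponent is $\Omega(\theta t)$, not $\Omega(\theta m)$; e.g.\ for $t=1$ the bound $r_{i_1}/n=O(\max\{\eps,\theta\})$ can fail with constant probability (take a matrix where a small constant fraction of rows carries essentially all the mass), so your union bound over $t\le m$ does not give the claimed probability, and the attribution of the $n^\delta$ prefactor to this union bound is also off. Fortunately the full prefix-uniform event is not needed: in the sum $V=\sum_t\EX[\Delta_t^2\mid i_{<t}]$ the middle terms aggregate to $O\bigl(m^2\sum_{b\le m}r_{i_b}/n\bigr)$, so it suffices to condition on the single event $\sum_{b\le m}r_{i_b}/n\le 2m\max\{\eps,\theta\}$, which by Bernstein/Chernoff fails with probability $2^{-\Omega(\theta m)}$; adding $\Pr[V>v]$ to the Freedman bound (the standard way of handling the random total variance) completes the argument. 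With that repair, and restricting to the non-vacuous regime $\theta\gtrsim(\log m)/m$ so that the diagonal $O(m)$ correction and the $O(m)$ term in $V$ are absorbed (you gesture at this but should state it), your proof is correct; also note that your secondary choice $\tau=\tfrac12 m^2\eps$ for the lower tail is off by the $1/m$ diagonal correction, while your primary choice $\tau=\tfrac14 m^2\max\{\eps,\theta\}$ already covers that case.
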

\begin{proof}
We need the following version of Bernstein's inequality.
\irnote{Find reference!}
\begin{lemma}
  \label{conc1}
  Suppose that $X_1, \ldots, X_n$ are i.i.d.\ random variables that are distributed over $[0; 1]$.
  Suppose that $\mathrm{E}[X_i] = \eps$. Then, for every $0 < \theta < 1$, one has
  $$
  \mathrm{Pr}\left[\max\left\{\frac{1}{n} \sum_{i=1}^n X_i, \theta \right\} \in \left[\frac{1}{2}; 2\right] \cdot \max\{\eps, \theta\}\right] \geq 1 - 2^{-\Omega(\theta n)}.
  $$
\end{lemma}

We just apply Lemma~\ref{conc1} and take union bound over the rows
of $M'$.
\end{proof}

The following two lemmas are immediate corollaries of
Lemma~\ref{conc1} and Lemma~\ref{conc2}, respectively.

\begin{lemma}
  \label{conc1_c}
  For every $c > 1$, $\alpha > 0$, positive integer $d$, $\theta > 0$ and a hash function $h \colon \{0, 1\}^d \to \Zbb$, one has
  $$
  \underset{P}{\mathrm{Pr}}\left[\max\{\widehat{\zeta}(c, d, \alpha, h, P), \theta\}
 \in \left[1/2;2\right] \cdot \max\{\zeta(c, d, \alpha, h), \theta\}\right] \geq 1 - 2^{-\Omega\left(\theta \cdot 2^{\delta(d) \cdot d}\right)}.
  $$
\end{lemma}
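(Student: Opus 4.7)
The plan is to reduce the claim to a direct application of Lemma~\ref{conc1} by expressing $\widehat{\zeta}(c,d,\alpha,h,P)$ as a sample mean of $|P|$ i.i.d.\ $[0,1]$-valued random variables whose expectation is exactly $\zeta(c,d,\alpha,h)$.

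First, I would observe that by translation invariance of the simple random walk on $\{0,1\}^d$, the distribution of $\|W_{\alpha d}(u) - u\|_1$ does not depend on the starting point $u$, so the quantity
$$
b := \underset{v \sim W_{\alpha d}(u)}{\mathrm{Pr}}\left[\|u - v\|_1 \leq \frac{d}{2c}\right]
$$
is a deterministic constant in $(0,1]$. For each $u \in \{0,1\}^d$, set
$$
f(u) := \underset{v \sim W_{\alpha d}(u)}{\mathrm{Pr}}\left[h(u) = h(v) \text{ and } \|u-v\|_1 \leq \frac{d}{2c}\right] \in [0,b],
$$
and $Y_u := f(u)/b \in [0,1]$. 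Expanding the conditional probability defining $\widehat{\zeta}$ and using that the denominator contributes exactly the constant $b$ for every $u \in P$, one obtains
$$
\widehat{\zeta}(c,d,\alpha,h,P) = \frac{1}{|P|} \sum_{u \in P} Y_u, \qquad \zeta(c,d,\alpha,h) = \underset{u \sim \{0,1\}^d}{\mathrm{E}}\left[Y_u\right].
$$

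Next, I would apply Lemma~\ref{conc1} to the collection $\{Y_u\}_{u \in P}$. Since $P$ consists of $n = 2^{\delta(d) \cdot d}$ points drawn i.i.d.\ uniformly from $\{0,1\}^d$, the $Y_u$ are themselves i.i.d.\ $[0,1]$-valued random variables with common mean $\zeta(c,d,\alpha,h)$. Lemma~\ref{conc1} with this $n$, mean $\eps = \zeta(c,d,\alpha,h)$, and the given $\theta$ immediately yields the stated inequality with failure probability $2^{-\Omega(\theta \cdot 2^{\delta(d) \cdot d})}$.

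There is no real obstacle here; the paper bills this lemma as an immediate corollary of Lemma~\ref{conc1}, and the only substantive observation is that the conditioning event $\{\|u-v\|_1 \leq d/(2c)\}$ has the same probability $b$ for every starting point $u$, which is what collapses the empirical conditional probability into an honest sample mean rather than a ratio of two empirical quantities.
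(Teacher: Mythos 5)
Your proof is correct and follows exactly the route the paper intends: since Lemma~\ref{conc1_c} is stated as an immediate corollary of Lemma~\ref{conc1}, the only content is your observation that the conditioning event has the same probability $b$ for every starting point, so $\widehat{\zeta}$ is an honest sample mean of i.i.d.\ $[0,1]$-valued variables with mean $\zeta$, to which Lemma~\ref{conc1} applies directly. No gaps.
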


\begin{lemma}
  \label{conc2_c}
  For every $\beta > 0$, positive integer $d$, $\theta > 0$ and a hash function $h \colon \{0, 1\}^d \to \Zbb$, one has
  $$
  \underset{P}{\mathrm{Pr}}\left[
    \max\{\widehat{\eta}(d, \beta, h, P), \theta\}
 \in \left[1/2;2\right] \cdot \max\{\eta(d, \beta, h), \theta\}\right] \geq 1 - 2^{\delta(d) \cdot d} \cdot 2^{-\Omega\left(\theta \cdot 2^{\delta(d) \cdot d}\right)}.
  $$
\end{lemma}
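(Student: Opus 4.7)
The plan is to realize Lemma~\ref{conc2_c} as a direct instantiation of Lemma~\ref{conc2}, for a carefully chosen $\{0,1\}$-valued symmetric matrix whose entries encode the event being measured by $\widehat{\eta}$.

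First, I would fix the hash function $h$, the parameter $\beta>0$, and set $n := 2^d$, identifying the rows and columns of an $n \times n$ matrix $M$ with points of $\{0,1\}^d$. I would then define
$$
M_{u,v} \; := \; \mathbf{1}\!\left[\,h(u)=h(v)\ \text{and}\ \|u-v\|_1 > (1/2-\beta)\,d\,\right].
$$
This matrix is symmetric (the indicator is symmetric in $u$ and $v$), has entries in $\{0,1\}\subseteq [0,1]$, and its uniform average over all pairs $(u,v)$ is by definition exactly $\eta(d,\beta,h)$. This identifies the parameter $\eps$ of Lemma~\ref{conc2} with $\eta(d,\beta,h)$.

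Next, since $P$ is a uniform random sample of size $2^{\delta(d)\cdot d} = n^{\delta(d)}$ from $\{0,1\}^d$, the principal submatrix $M'$ of $M$ indexed by $P$ is precisely a sample as in Lemma~\ref{conc2} (with the roles of $\delta$ and $\delta(d)$ identified). Moreover, the uniform average of the entries of $M'$ is by definition $\widehat{\eta}(d,\beta,h,P)$. Applying Lemma~\ref{conc2} with this $M$, $M'$, the given $\theta$, and $\eps=\eta(d,\beta,h)$, the failure probability is bounded by
$$
n^{\delta(d)} \cdot 2^{-\Omega(\theta \cdot n^{\delta(d)})} \;=\; 2^{\delta(d)\cdot d}\cdot 2^{-\Omega(\theta\cdot 2^{\delta(d)\cdot d})},
$$
which matches the claimed bound.

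There is essentially no obstacle here; all of the real work (row-wise Bernstein together with a union bound over rows) was done in Lemma~\ref{conc2}. The only thing to check in passing is that the encoding is faithful: the distance condition $\|u-v\|_1 > (1/2-\beta)d$ and the collision condition $h(u)=h(v)$ both depend only on the pair $(u,v)$, so they can be baked into the matrix entries, and the symmetry of both conditions ensures $M$ is symmetric, as required by Lemma~\ref{conc2}. The analogous derivation of Lemma~\ref{conc1_c} from Lemma~\ref{conc1} is even simpler, since there one only averages over a single random index ($u\sim P$, with the walk from $u$ providing $v$), so no submatrix is needed---just i.i.d.\ $[0,1]$-valued samples indexed by $u$, yielding the stated bound without the outer $2^{\delta(d)\cdot d}$ factor.
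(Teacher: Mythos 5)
Your proof is correct and matches the paper's intent exactly: the paper derives Lemma~\ref{conc2_c} as an immediate corollary of Lemma~\ref{conc2}, precisely by viewing the indicators of the event $\{h(u)=h(v),\ \|u-v\|_1>(1/2-\beta)d\}$ as the entries of a symmetric $2^d\times 2^d$ matrix whose global average is $\eta(d,\beta,h)$ and whose random principal submatrix indexed by $P$ has average $\widehat{\eta}(d,\beta,h,P)$. Your side remark about deriving Lemma~\ref{conc1_c} directly from Lemma~\ref{conc1} is also how the paper proceeds.
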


\subsection{Proof of Theorem~\ref{main_thm}}

We are finally ready to complete the proof of the main result,
Theorem~\ref{main_thm}.

Let us first assume that $p_1 = o(1)$ and then show how to handle
the general case.  Suppose that $n = |P| = 2^{\delta \cdot d}$. By the assumption of Theorem~\ref{main_thm},
$\delta = o(1)$. Let $\{h_1, h_2, \ldots, h_T\}$ be a set of hash functions. We can assume that $\frac{\log T}{p_1} \leq n^{1 - \Omega(1)}$,
since otherwise we are done. Let us fix $\gamma = \gamma(d) > 0$ such that $\gamma = o(1)$ and
$2^{-\gamma d} \ll p_1^{\omega(1)}$.
We can do this, since if $p_1 = 2^{-\Omega(d)}$, then $1 / p_1 = n^{\omega(1)}$, and the desired statement
is true. Then, by Lemma~\ref{dubiner}, there is $\alpha = \alpha(d)$ and $\beta = \beta(d) = o_{\gamma}(1)$ such that for every $1 \leq i \leq T$
\begin{equation}
  \label{for_all}
\zeta(c, d, \alpha, h_i) \leq \eta(d, \beta, h_i)^{\frac{1}{2c - 1} - o_{c, \gamma}(1)} + 2^{-\gamma d}.
\end{equation}

Let us choose $\theta > 0$ such that
$$
T \cdot 2^{\delta \cdot d} \ll 2^{\theta \cdot 2^{\delta \cdot d}}
$$
and $\theta \ll p_1$. We can do it since, by the above assumption, $\frac{\log T}{p_1} \leq n^{1 - \Omega(1)} = 2^{\delta \cdot d (1 - \Omega(1))}$.

Then, from Lemma~\ref{conc1_c} and Lemma~\ref{conc2_c} we get that,
with high probability over the choice of $P$, one has for every $1 \leq i \leq T$:
\begin{itemize}
\item $\max\{\widehat{\zeta}(c, d, \alpha, h_i),\theta\}\in [1/2;2] \cdot \max\{\zeta(c, d, \alpha, h_i),\theta\}$;
\item $\max\{\widehat{\eta}(d, \beta, h_i),\theta\}\in [1/2;2] \cdot \max\{\eta(d, \beta, h_i),\theta\}$.
\end{itemize}

Suppose these conditions hold and assume there exist a distribution $\mathcal{D}$ over $[T]$ such that the corresponding
hash family is $(d / 2c, (1/2 - \beta(d)) d, p_1, p_2)$-sensitive for $P$.
Then,
\begin{multline}
  \label{trunc_1}
p_1 \leq \underset{i \sim \mathcal{D}}{\mathrm{E}}[\widehat{\zeta}(c, d, \alpha, h_i)]
\leq \underset{i \sim \mathcal{D}}{\mathrm{E}}[\max\{\widehat{\zeta}(c, d, \alpha, h_i), \theta\}]
\leq 2 \cdot \underset{i \sim \mathcal{D}}{\mathrm{E}}[\max\{\zeta(c, d, \alpha, h_i), \theta\}]
\\ \leq 2 \cdot \underset{i \sim \mathcal{D}}{\mathrm{E}}[\zeta(c, d, \alpha, h_i)] + \theta.
\end{multline}
Similarly,
\begin{multline}
  \label{trunc_2}
p_2 \geq \underset{i \sim \mathcal{D}}{\mathrm{E}}[\widehat{\eta}(d, \beta, h_i)]
\geq \underset{i \sim \mathcal{D}}{\mathrm{E}}[\max\{\widehat{\eta}(d, \beta, h_i), \theta\}] - \theta
\geq \frac{1}{2} \cdot \underset{i \sim \mathcal{D}}{\mathrm{E}}[\max\{\eta(d, \beta, h_i), \theta\}] - \theta
\\ \geq \frac{1}{2} \cdot \underset{i \sim \mathcal{D}}{\mathrm{E}}[\eta(d, \beta, h_i)] - \theta.
\end{multline}

Averaging~(\ref{for_all}) and applying Jensen's inequality, we have
\begin{equation}
  \label{for_each}
\underset{i \sim \mathcal{D}}{\mathrm{E}}[\zeta(c, d, \alpha, h_i)]
\leq \underset{i \sim \mathcal{D}}{\mathrm{E}}[\eta(d, \beta, h_i)]^{\frac{1}{2c - 1} - o_{}(1)} + 2^{-\gamma(d) \cdot d}.
\end{equation}

Thus, substituting~(\ref{trunc_1}) and~(\ref{trunc_2}) into~(\ref{for_each})
$$
\frac{p_1 - \theta}{2} \leq \left(2 (p_2 + \theta)
\right)^{\frac{1}{2c - 1} - o(1)} + 2^{-\gamma d},
$$ which proves the theorem, since $\theta \ll p_1$,
$2^{-\gamma d} \ll p_1^{\omega(1)}$, and $p_1 = o(1)$.

Now let us deal with the case $p_1 = \Omega(1)$. This can be
reduced to the case $p_1 = o(1)$ by choosing a slowly-growing
super constant $k$ and replacing the set of $T$ functions with the set
of $T^k$ \emph{tuples} of length $k$. This replaces $p_1$ and $p_2$
with $p_1^k$ and $p_2^k$, respectively. In the same time, we choose
$k$ so that $T' = T^k$ still satisfy the hypothesis of the theorem.  Then, we just
apply the above proof.

\singlespacing
{\small
\bibliographystyle{alpha}
\bibliography{desk}
}
\onehalfspacing
\appendix
\section{Upper Bounds Are in the Model}
\label{apx:upper}

In this section, we show how the data-dependent hash family
from~\cite{ar-optimal-15} fits into the model of the lower bound from
Section \ref{sec:model}.

Let us briefly recall the hash family construction
from~\cite{ar-optimal-15}.  For simplicity, assume that all points and
queries lie on a sphere of radius $R \gg cr$.  First, consider a
\emph{data-independent} hash family from~\cite{ainr-blsh-14,
  ar-optimal-15}, called \emph{Spherical LSH}.  It gives a good
exponent $\rho \leq \frac{1}{2c^2 - 1} + o(1)$ for distance thresholds
$r$ vs $\sqrt{2} R$ (the latter corresponds to a typical distance
between a pair of points from the sphere). The main challenge that
arises is how to handle distance thresholds $r$ vs $cr$, where the
latter may be much smaller than $\sqrt{2} R$. Here comes the main
insight of~\cite{ar-optimal-15}.

We would like to process the dataset so that the distance between a
typical pair of \emph{data points} is around $\sqrt{2} R$, so to apply
Spherical LSH. To accomplish this, we remove all the clusters of
radius $(\sqrt{2} - \eps) R$ that contain lots of points (think of
$\eps > 0$ being barely sub-constant). We will treat these clusters
separately, and will focus on the remainder of the pointset for
now. So for the remainder of the pointset, we just apply the Spherical
LSH to it. This scheme turns out to satisfy the definition of the
data-dependent hash family for the remaining points and for distance
thresholds $r$ vs. $cr$; in particular, the hash function is sensitive
for the remaining set only! The intuition is that, for any potential
query point, there is only a small number of data points within
distance $(\sqrt{2} - \eps)R$ --- otherwise, they would have formed
yet another cluster, which we would have removed --- and the larger
distances are handled well by the Spherical LSH. Thus, \emph{for a
  typical pair of data points}, Spherical LSH is sensitive for $P$
(see Definition~\ref{data_dependent_def}). 

How does \cite{ar-optimal-15} deal with the clusters? The main
observation is that one can enclose such a cluster in a ball of radius
$(1 - \Omega(\eps^2))R$, which intuitively makes our problem a little
bit simpler (after we reduce the radius enough times, the problem
becomes trivial). To answer a query, we query \emph{every cluster}, as
well as \emph{one part} of the remainder (partitioned by the Spherical
LSH).  This can be shown to work overall, in particular, we can
control the overall branching and depth of the recursion
(see~\cite{ar-optimal-15} for the details).

For both the clusters, as well as the parts obtained from the
Spherical LSH, the algorithm recurses on the obtained point
subsets. The overall partitioning scheme from~\cite{ar-optimal-15} can
be seen as a tree, where the root corresponds to the whole dataset,
and every node either corresponds to a cluster or to an application of
the Spherical LSH. One nuance is that, in different parts of the
tree the Spherical LSH partitioning obtains different $p_1, p_2$
(depending on $R$). Nonetheless, each time it holds that $p_1\ge
p_2^\rho$ for $\rho \leq \frac{1}{2c^2 - 1} + o(1)$. Hence, a node
terminates as a leaf once its accumulated $p_2$ (product of $p_2$'s of
``Spherical LSH'' nodes along the path from the root) drops below
$1/n$.

We now want to argue that the above algorithm can be recast in the
framework of data-dependent hashing as per
Definition~\ref{data_dependent_def}.  We consider a subtree of the
overall tree that contains the root and is defined as follows. Fix a
parameter $l=n^{-o(1)}$ (it is essentially the target $p_2$ of the
partition). We perform DFS of the tree and cut the tree at any
``Spherical LSH'' node where the cumulative $p_2$ drops below $l$.
This subtree gives a partial partition of the dataset $P$ as follows:
for ``Spherical LSH'' nodes we just apply the corresponding partition,
and for cluster nodes we ``carve'' them in the random order.  It turns
out that if we choose $l=n^{-o(1)}$ carefully, the partition will
satisfy Definition \ref{data_dependent_def} and the preconditions of
Theorem \ref{main_thm}. In particular, the description complexity of
the resulting hash function is $n^{o(1)}$.

Let us emphasize that, while Definition~\ref{data_dependent_def} is
certainly necessary for an ANN data structure based on data-dependent
hashing, it is not \emph{sufficient}. In fact, \cite{ar-optimal-15}
prove additional properties of the above partitioning scheme,
essentially because the ``$p_2$ property'' is ``on average'' one (thus, we
end up having to understand how this partitioning scheme treats
\emph{triples} of points).

\end{document}